\documentclass[a4paper,12pt]{article}

\usepackage[english]{babel}
\usepackage[T1]{fontenc}
\usepackage[utf8]{inputenc}
\usepackage{amsmath}
\usepackage{amsfonts}
\usepackage{amsthm}

\usepackage{xspace}
\usepackage{tabularx}
\usepackage{indentfirst}
\usepackage{subfigure}
\usepackage[small]{caption}
\usepackage{eucal}
\usepackage{eso-pic}
\usepackage{url}
\usepackage{booktabs}
\usepackage{afterpage}
\usepackage{parskip}
\usepackage{listings}
\usepackage{fancyhdr}
\usepackage{textcomp}
\usepackage{cite}
\usepackage{multirow}
\usepackage{setspace}
\usepackage{enumerate}
\usepackage{tikz-cd}
\usepackage{mathtools}
\usepackage{graphicx}
\usepackage{geometry}
\usepackage{lipsum} 
\usepackage{fancyhdr} 
\usepackage{enumitem}
\usepackage{hyperref}
\usepackage{nameref}
\usepackage{comment}
\usepackage{algorithm}
\usepackage{algpseudocode}
\usepackage{bm}
\usepackage{listings}


\newtheorem{theorem}{Theorem}[section]
\newtheorem{lemma}[theorem]{Lemma}
\newtheorem{definition}[theorem]{Definition}

\newtheorem{proposition}[theorem]{Proposition}

\usepackage{authblk}

\newcommand{\N}{\mathbb{N}}

\newcommand{\F}{\mathbb{F}}

\newcommand{\B}{\mathbb{B}}
\newcommand{\K}{\mathbb{K}}
\newcommand{\G}{\mathcal{G}}

\newcommand{\WGT}{\operatorname{WGT}}
\newcommand{\WGP}{\operatorname{WGP}}
\newcommand{\Tr}{\operatorname{Tr}}

\newcommand{\AN}{\operatorname{AN}}
\newcommand{\AI}{\operatorname{AI}}
\newcommand{\supp}{\operatorname{Supp}}

\newcommand{\DLR}{\operatorname{DLR}}
\newcommand{\xalpha}{{\bm{x}^{\bm{\alpha}}}}
\newcommand{\xbeta}{\bm{x}^{\bm{\beta}}}

\title{An algebraic attack on stream ciphers with application to nonlinear filter generators and WG-PRNG}
\author{Carla Mascia$^1$, Enrico Piccione$^2$, and Massimiliano Sala$^1$} 
\affil{$^1$University of Trento, IT, $^2$University of Bergen, NO}
\date{}

\begin{document}
\maketitle

\begin{abstract}
In this paper, we propose a new algebraic attack on stream ciphers. Starting from a well-known attack due to Courtois and Meier, we design an attack especially effective against nonlinear filter generators. First, we test it on two toy stream ciphers and then we show that the level of security of WG-PRNG, one of the stream ciphers submitted to the NIST competition on Lightweight Cryptography, is less than that claimed until now. 
\end{abstract}

\section{Introduction}
Stream ciphers\cite{rueppel1986stream} are one of the main cryptographic primitives used in symmetric cryptography. Historically, the first stream ciphers were built with {\lq\lq linear\rq\rq} registers, where linearity is meant both in the register update function (which sends one state to the next) and in the output function, which computes the keystream as a function of the current state.
Purely linear registers are not used any longer because their state can be quickly recovered from a small portion of their produced keystream, e.g. by the Berlekamp-Massey algorithms \cite[Chapter 7]{berlekamp1968algebraic}. Since the use of linear structures translates into hardware implementations based on only a few XOR gates, which is highly desirable for practical applications, most modern stream ciphers retain some part of this original structure. Among the many competing stream designs, one has  recently attracted some interest: the so-called \textit{nonlinear filter generators} \cite{dichtl1997nonlinear}.  Indeed, they preserve a linear update for their state, composed of one or several linear registers, but they output their keystream via a nonlinear function of their state: this function is called \textit{filter}. The most notable example of these ciphers is the WG-PRNG, which was submitted to the NIST competition on Lightweight Cryptography \cite{nist}.

Traditionally, stream ciphers are attacked with two  approaches: correlation attacks, that exploit possible correlations between some part of the keystream and a portion of the initial state, and approximation attacks, where the nonlinear part is approximated by a linear component. The design defenses against these types of attacks rely on choosing nonlinear components with specific properties, such as high nonlinearity \cite{carlet2004new} and high correlation immunity\cite{siegenthaler1984correlation}. In recent years, a new family of attacks have emerged, the so-called
\textit{algebraic attacks}. Some interesting works in this direction are \cite{Armknecht2009, la2022stream,la2022algebraic}.
In this paper, we propose a new form of algebraic attack, which is especially effective against nonlinear filter generators. We show with two toy examples how the attack can be performed in practice. We also apply our attack to WG-PRNG and we provide a complexity estimate that shows a fatal weakness of this cipher. We also report previous attempts at breaking WG-PRNG with algebraic attacks and we discuss their shortcomings.\\

The paper is structured as follows: 
\begin{itemize}
    \item In Chapter 2, we collect all the notations, definitions and known facts needed in the remainder of the paper. We briefly illustrate the XL-Algorithm to solve Boolean equations systems and the algebraic attack to nonlinear filter generators presented in \cite{courtois2003algebraic}.
    \item In Chapter 3, we explain our improved algebraic attack in detail. 
    \item In Chapter 4, to validate our algebraic attack, first we apply it to two toy stream ciphers and then we show that it is feasible to perform it on WG-PRNG. We conclude showing that the security of WG-PRNG is less that claimed until now. For the sake of presentation, we will first describe the part regarding WG-PRNG, and then the one on the two toy ciphers. 
\end{itemize}

\section{Preliminaries}
In this section, we fix some notations and recall some known results. We denote by $\F_2$ the field with two elements, and by $R = \F_2[x_1,\dots ,x_n]$ the polynomial ring in $n$ variables over $\F_2$. Given a monomial $\xalpha=x_1^{\alpha_1} x_2^{\alpha_2} \cdots x_n^{\alpha_n}$ of $R$, the degree of $\xalpha$ is $\deg \xalpha = \sum_{i=1}^n \alpha_i$. For $d \in \N$, denote by $M_{\leq d}$ the set of all the square-free monomials in $R$ of degree at most $d$, that is \[
M_{\leq d} = \{ \xalpha \in R \ | \ \deg \xalpha \leq d \text{ and } \alpha_i \leq 1 \text { for } i=1, \dots, n\}.
\]

Let $\xalpha$ and $\xbeta$ be two distinct monomials of $R$. The \textit{degree reverse lexicographic order} (DLR) on $R$ is defined as ${\xalpha} \prec_{\DLR} {\xbeta}$ if either $\deg \xalpha < \deg \xbeta$ or $\deg \xalpha = \deg \xbeta$ and the rightmost nonzero component of the vector $\bm{\alpha} - \bm{\beta}$ is positive. Let $f \in R$, we denote by $\supp(f)$ the support of $f$, that is the set of all the non-zero terms of $f$.

For $n \in \N$, let $\B_n = \{f: \F_2^n \rightarrow \F_2\}$ denote the set of the Boolean functions in $n$ variables.  
Depending on the context, we represent $f \in \B_n$ as a square-free polynomial of $R$ or by means of its \textit{algebraic normal form} (ANF), namely as a polynomial of the quotient ring $R / L_n$, where $L_n = \langle x_1^2 - x_1, \dots, x_n^2 - x_n \rangle \subset R$ is the ideal generated by the field equations.

Let $f\in \B_n$, then
$$\AN(f)=\{g\in\B_n\,|\, fg=0\}$$
is called the \textit{set of annihilators} of $f$. Notice that
$\AN(f+ 1)=\{g\in\B_n\,|\, fg=g\}$. Moreover,
$\AN(f)$ and $\AN(f+ 1)$ are ideals of $R/L_n$, and $\AN(f)=\langle f+1 \rangle $ and $\AN(f+ 1)= \langle f \rangle.$


Let $f\in \B_n$, then
$$\AI(f)=\min\{\deg g \,|\, g\in \AN(f) \cup \AN(f+ 1),\, g\neq 0\}$$
is called the \textit{algebraic immunity} of $f$. By \cite{courtois2003algebraic}, it holds $\AI(f) \leq \left\lfloor\frac{n+1}{2}\right\rfloor$.

When $f = a+ \sum_{i=1}^n a_i x_i \in \B_n$, for some $a, a_i \in \F_2$, we say that $f$ is an \textit{affine Boolean function}. When $a=0$, $f$ is said to be \textit{linear}. 

A nonlinear filter generator is the combination of a linear feedback shift register (LFSR) and a nonlinear Boolean function. More precisely, 
\begin{definition}\label{nonlinearKSG}
A \emph{nonlinear filter generator} is a stream cipher that starts from an initial state $S=(s_0,\dots ,s_{n-1})\in\F_2^n$ and, at each clock $t\geq 0$, produces a keystream bit $z_t=F\left(L^t\left(S\right)\right)$, where
\begin{itemize}
\item $L\colon \F_2^n\to \F_2^n$ is the linear update function;
\item $F\in \B_n$ is the nonlinear output function.
\end{itemize}
$F$ is called \emph{filter function}.
\end{definition}

\subsection{Solving a Boolean equations system}\label{subsec: solving}

Let $f_1, \dots, f_t \in \B_n$ be nonlinear Boolean equations. Consider the following system
\begin{equation}\label{eq:system}
\begin{cases}
f_1(\bm{x})=0,\\
f_2(\bm{x})=0,\\
\quad \vdots\\
f_t(\bm{x})=0.
\end{cases}
\end{equation}

Several methods have been developed for solving such a system, such as the XL-Algorithm \cite{courtois2000efficient} and Gr\"obner basis techniques \cite{faugere1999new,faugere2002new}.

\subsection*{XL-Algorithm}
The XL-Algorithm, introduced by Courtois, Klimov, Patarin, and Shamir in \cite{courtois2000efficient}, is a computation method for solving a system as the one in \eqref{eq:system}. Assume that all the $f_i$, for $i =1, \dots, t$, have the same positive degree $d\in\N$. Fix $D \in \N$ such that $D \geq d$. The idea of the XL-Algorithm is to generate all the possible equations of at most degree $D$ that satisfy the system. The algorithm performs the following four steps.
\begin{enumerate}
\item \textbf{Multiply}: For each $i=1,\dots ,t$, generate the equations 
$$\{ \xalpha f_i(\bm{x})=0 \ | \  \xalpha \in M_{\leq D-d} \}.$$
\item \textbf{Linearize}: Consider each monomial in $M_{\leq D}$ as an independent variable and perform Gaussian elimination on the equations obtained in Step 1. The ordering on the monomials must be such that all the terms containing one (fixed) variable (say $x_1$) are eliminated last.

\item \textbf{Solve:} If Step 2 yields at least one univariate equation in the powers of $x_1$, solve this equation over $\F_2$. If not, algorithm fails.
\item \textbf{Repeat:} Simplify the equations and repeat the process to find the values of the other variables.
\end{enumerate}
Other versions of the XL-algorithm can be found in \cite{courtois2002higher}. If $t$ is big enough, we expect to find one solution for the system. In this case, the complexity of XL will be essentially the complexity of one single Gaussian reduction in Step 2.\\
Let $N$ be the number of equations generated in XL, and $T$ be the number of monomials in $M_D$. Then 
\begin{equation}\label{eq:estimates}
N=t \cdot\left(\sum_{i=0}^{D-d}\left(\begin{array}{l}
n \\
i
\end{array}\right)\right) \approx t \cdot\left(\begin{array}{c}
n \\
D-d
\end{array}\right) \quad \text{and} \quad T=\sum_{i=0}^{D}\left(\begin{array}{l}
n \\
i
\end{array}\right) \approx\left(\begin{array}{l}
n \\
D
\end{array}\right).
\end{equation}
Let $\mathcal{I}$ be the number of linearly independent equations in XL. Clearly, $\mathcal{I} \leq T$. In practice, $\mathcal{I} \leq N$. The main heuristic behind XL is that for some $t, D$, we have always $N \geq T$. Then we expect that if $\mathcal{I} \geq T-D$, it is possible, by Gaussian elimination, to obtain one equation in only one variable, and XL will work. Otherwise, we need a bigger $t$ or $D$. 

In conclusion, if the algorithm works, the time complexity is approximately $\mathcal{O}\left(T^\omega\right)$, where $2<\omega <3$ depends on the implementation of the matrix multiplication algorithm. The minimum known value for $\omega$ is $\omega\approx 2.3728596$ (see \cite{alman2021refined}). But in real implementations, $\omega=\log_2(7)\approx 2.8073549$ (see \cite{strassen1969gaussian}), and in our estimates we will consider this last approximation. 

\subsection{Algebraic Attack}
An algebraic attack to a nonlinear filter generator with nonlinear output function $F$, if we know $t$ keystream bits, consists in solving a nonlinear Boolean equations system, namely
\begin{equation}\label{eq: system with F}
    F\left(L^{i}\left(\bm{x}\right)\right) =  z_{i}, \qquad \text{ for } i=0, \dots, t-1,
\end{equation}

in order to recover the initial state. 

In the following, we will report the generic algebraic attack designed by Courtois and Meier, and we refer to the original work \cite{courtois2003algebraic} for more detail.\\
The main idea behind this attack is to decrease the degree of the original system by multiplying each equation in \eqref{eq: system with F}, that are usually of high degree, by a well chosen $g\in \B_n$. The resulting equations are
\begin{equation}\label{algEqCM}
F\left(L^i\left(\bm{x}\right)\right)g\left(L^{i}\left(\bm{x}\right)\right)=z_i g\left(L^i\left(\bm{x}\right)\right), \qquad i=0, \dots, t-1,
\end{equation}
which are of substantially lower degree.
Then, if $z_i = 0$ (resp. $z_i = 1$), we can choose $g \in \AN(F+1)$ (resp. $g \in \AN(F)$), and we get an equation of low degree on the initial state bits, that is $g({\bm x}) = 0$. The smaller the algebraic immunity of $F$ is, the lower degree the resulting equation has. If we get one such equation for each of sufficiently many keystream bits, we obtain a very overdefined system of multivariate equations that can be solved efficiently.

\section{An improved algebraic attack}\label{sec: our attack}

The core idea of our new algebraic attack is to use many annihilators simultaneously, instead of only one, and provide a good estimate of the number of keystream bits needed to perform the attack, which is strictly related to the number of linearly independent equations after the multiply phase in the XL-Algorithm. Indeed, by increasing the number of linearly independent equations, we need fewer keystream bits than the ones required in the Courtois and Meier's attack.

Before presenting our attack, we need some preliminary results.

\begin{lemma}\label{lemma: g divides sum} 
Let $R= \mathbb{K}[x_1, \dots, x_n]$ be a polynomial ring over any field $\K$ and let $A = \{f_1, \dots, f_k\} \subset R^{\prime}= \mathbb{K}[x_1, \dots, x_m]$, with $m < n$, be a set of linearly independent polynomials. Let $g, g_1, \dots, g_k \in R^{\prime \prime} = \mathbb{K}[x_{m+1},\dots, x_n]$, with $g \not \in \mathbb{K}$. If $h = \sum_{i=1}^k g_if_i$ is equal to $gf$, for some $f \in R^{\prime}$, then either $h=0$ or, for all $i=1, \dots, k$, $g$ divides $g_i$. 
\end{lemma}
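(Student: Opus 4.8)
The plan is to exploit the \emph{separation of variables}. Since $R' = \mathbb{K}[x_1, \dots, x_m]$ and $R'' = \mathbb{K}[x_{m+1}, \dots, x_n]$ involve disjoint sets of variables, I would identify $R$ with $R''[x_1, \dots, x_m]$, the polynomial ring in $x_1, \dots, x_m$ having coefficients in $R''$. In this viewpoint $R$ is a free $R''$-module and any $\mathbb{K}$-basis of $R'$ is an $R''$-module basis of $R$. The idea is then simply to compare, on both sides of $h = gf$, the coefficients of the monomials $\bm{x}^\alpha$ in the variables $x_1, \dots, x_m$.

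Concretely, I would write $f_i = \sum_\alpha a_{i,\alpha}\, \bm{x}^\alpha$ and $f = \sum_\alpha c_\alpha\, \bm{x}^\alpha$ with $a_{i,\alpha}, c_\alpha \in \mathbb{K}$, the sums running over the finitely many monomials in $x_1, \dots, x_m$ that actually occur. Reading coefficients in $R''[x_1, \dots, x_m]$, the coefficient of $\bm{x}^\alpha$ in $h = \sum_i g_i f_i$ is $\sum_{i=1}^k a_{i,\alpha}\, g_i \in R''$, whereas the coefficient of $\bm{x}^\alpha$ in $gf$ is $c_\alpha\, g$, because $g$ involves none of the variables $x_1, \dots, x_m$. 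The equality $h = gf$ therefore yields, for every $\alpha$,
\[
\sum_{i=1}^k a_{i,\alpha}\, g_i = c_\alpha\, g .
\]

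The key step is to turn this family of identities into a solvable linear system over $R''$, and this is exactly where the $\mathbb{K}$-linear independence of $f_1, \dots, f_k$ enters. Independence means the coefficient array $(a_{i,\alpha})$ has $k$ linearly independent rows, so I can select $k$ monomials $\bm{x}^{\alpha_1}, \dots, \bm{x}^{\alpha_k}$ for which the $k \times k$ matrix $M$ with entries $M_{j,i} = a_{i,\alpha_j}$ is invertible over $\mathbb{K}$. Specializing the displayed identities to these $k$ monomials gives the matrix equation $M \bm{g} = g\,\bm{c}$, where $\bm{g} = (g_1, \dots, g_k)^{\mathsf T}$ and $\bm{c} = (c_{\alpha_1}, \dots, c_{\alpha_k})^{\mathsf T}$. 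Since $M^{-1}$ has entries in $\mathbb{K}$, I obtain $\bm{g} = g\, M^{-1} \bm{c}$, that is $g_i = \lambda_i\, g$ with $\lambda_i \in \mathbb{K}$ for every $i$. In particular $g \mid g_i$ for all $i$, which is the assertion (and in fact slightly more: each $g_i$ is a $\mathbb{K}$-multiple of $g$).

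I do not anticipate a genuine difficulty. The single point requiring care is the linear-algebra step, namely justifying that $\mathbb{K}$-linear independence of the $f_i$ furnishes an invertible $k \times k$ coefficient matrix, together with the harmless bookkeeping that all monomial supports involved are finite (which holds since each $f_i$ and $f$ is an honest polynomial). The hypothesis $g \notin \mathbb{K}$ is used only to guarantee $g \neq 0$ and to make the divisibility conclusion nontrivial; the $h = 0$ alternative in the statement is already subsumed by the argument, since the coefficient-matching identities force $g_i = \lambda_i g$ in any case, and $h = 0$ corresponds precisely to all $\lambda_i = 0$.
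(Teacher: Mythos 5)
Your proof is correct, and it follows a genuinely different route from the paper's. The paper proceeds by induction on $\deg h$: each $g_i$ is divided by $g$ as $g_i = g h_i + r_i$ with $\deg r_i < \deg g$, one deduces that $g$ divides $h' = \sum_i r_i f_i$, the induction hypothesis gives $h' = 0$, and the $\K$-linear independence of the $f_i$ (together with $r_i \in R''$) forces every $r_i = 0$, hence $g_i = g h_i$. You instead argue directly: regarding $R$ as the free $R''$-module $R''[x_1,\dots,x_m]$, you match coefficients of the monomials in $x_1,\dots,x_m$ on both sides of $\sum_i g_i f_i = gf$, then use the independence of the $f_i$ to select $k$ monomials whose $k\times k$ coefficient matrix $M$ is invertible over $\K$, so that inverting $M$ yields $g_i = \lambda_i g$ with $\lambda_i \in \K$. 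Your route buys three things: (i) a stronger conclusion, namely that each $g_i$ is a \emph{scalar} multiple of $g$, which makes the alternative ``$h=0$'' in the statement redundant (it is precisely the case where all $\lambda_i$, hence all $g_i$, vanish); (ii) no induction and no use of the hypothesis $g \notin \K$; (iii) it sidesteps the paper's division-with-remainder step, which is actually delicate: when $R''$ has two or more variables one cannot in general write $g_i = g h_i + r_i$ with $\deg r_i < \deg g$ (take $g = x_{m+1}x_{m+2}$ and $g_i = x_{m+1}^2$: every term of $g h_i$ is divisible by $x_{m+1}x_{m+2}$, so the term $x_{m+1}^2$ survives in $r_i$ and $\deg r_i \geq \deg g$), so the paper's argument as written needs repair on this point, whereas yours relies on no such division. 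The one step you flagged as needing care --- that the full-row-rank coefficient array of linearly independent $f_i$ contains an invertible $k\times k$ submatrix --- is standard linear algebra and is indeed exactly where the independence hypothesis is consumed.
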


\begin{proof}
We prove the statement by induction on the degree of $h = \sum_{i=1}^k g_if_i$. 

Let $\deg h = 0$. Since $g$ is not a constant polynomial, therefore $\deg g >0$. By hypothesis, $g$ divides $h$, but $\deg g > \deg h$. It follows that $f = h=0$.

Let $\deg h > 0$. For any $i=1, \dots, k$, we write $g_i = gh_i + r_i$, with $\deg r_i < \deg g$ and $h_i, r_i \in R^{\prime \prime}$. Note that at least one $h_i$ has to be nonzero, otherwise all the $g_i$ have degree less than $\deg g$, and then $g$ can not divide $h$. We have 
\[
gf = \sum_{i=i}^k (gh_i + r_i)f_i = g \sum_{i=1}^k h_if_i + \sum_{i=1}^k r_if_i.
\]
Therefore, $g$ divides $h^{\prime} = \sum_{i=1}^k r_if_i$. As $\deg h^{\prime} < \deg h$, by induction hypothesis, either $h^{\prime} = 0$ or, for all $i=1, \dots, k$, $g$ divides $r_i$. The latter can not happen, as $\deg r_i < \deg g$, for all $i=1, \dots, k$. Then $h^{\prime} = 0$. Since $\{f_1, \dots, f_k\} \subset R^{\prime}$ is a set of linearly independent polynomials and $r_i \in R^{\prime \prime}$, it holds that $r_i = 0$, for all $i=1, \dots, k$. Therefore, $g_i = gh_i$, for all $i=1, \dots, k$ and the statement is proved. 
\end{proof}


\begin{proposition}\label{prop: lin ind}
Let $R= \mathbb{K}[x_1, \dots, x_n]$ be a polynomial ring over any field $\K$ and let $A = \{f_1, \dots, f_r\} \subset R$ be such that, for any $i=1, \dots, r$, $f_i \in R'= \mathbb{K}[x_1, \dots, x_m]$, with $m \leq n$. 
Let $B=\{m_{1,1}f_1, \dots, m_{1,k_1} f_1, \dots,  m_{r,1}f_r, \dots, m_{r,k_r} f_r\} \subset R$, where for $i=1, \dots, r$ and $j=1, \dots, k_i$, $m_{i,j}$ are nonzero monomials in $R''=\mathbb{K}[x_{m+1}, \dots, x_{n}]$, $m_{i,j} \neq cm_{i,k}$ for all $j \neq k$ and $c \in \mathbb{K}$.
If $A$ is a set of linearly independent polynomials in $R'$, then $B$ is a set of linearly independent polynomials in $R$.
\end{proposition}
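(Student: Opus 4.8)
The plan is to take an arbitrary vanishing $\K$-linear combination of the elements of $B$ and, by exploiting the fact that $R'$ and $R''$ are built on disjoint sets of variables, reduce the statement to the linear independence hypothesis on $A$. Concretely, I would start from $\sum_{i=1}^r \sum_{j=1}^{k_i} c_{i,j}\, m_{i,j} f_i = 0$ with $c_{i,j} \in \K$, and collect the terms attached to each fixed $f_i$. Setting $g_i = \sum_{j=1}^{k_i} c_{i,j}\, m_{i,j} \in R''$, the relation becomes $\sum_{i=1}^r g_i f_i = 0$. Since for each fixed $i$ the monomials $m_{i,1}, \dots, m_{i,k_i}$ are pairwise distinct (this is exactly the content of the hypothesis $m_{i,j} \neq c\, m_{i,k}$), they are $\K$-linearly independent, so proving $g_i = 0$ will immediately force all $c_{i,j} = 0$. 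Thus the whole proposition reduces to the implication: $\sum_i g_i f_i = 0$ with $g_i \in R''$ and $f_i \in R'$ linearly independent $\Rightarrow$ every $g_i = 0$.

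The key structural observation I would use is that $R \cong R''[x_1, \dots, x_m]$ is a \emph{free} $R''$-module whose basis is the set of monomials $\xalpha$ in the variables $x_1, \dots, x_m$. Writing each $f_i = \sum_{\bm{\alpha}} a_{i,\bm{\alpha}}\, \xalpha$ with $a_{i,\bm{\alpha}} \in \K$ and substituting, the relation $\sum_i g_i f_i = 0$ reorganizes as $\sum_{\bm{\alpha}} \big( \sum_{i=1}^r a_{i,\bm{\alpha}}\, g_i \big) \xalpha = 0$. Comparing the coefficients of each basis monomial $\xalpha$ (legitimate precisely because of the freeness) then yields the family of scalar-coefficient relations $\sum_{i=1}^r a_{i,\bm{\alpha}}\, g_i = 0$ in $R''$, one for each $\bm{\alpha}$.

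To finish, I would invoke the hypothesis on $A$: its $\K$-linear independence is equivalent to saying that the coefficient vectors $(a_{i,\bm{\alpha}})_{\bm{\alpha}}$, for $i = 1, \dots, r$, are linearly independent over $\K$, i.e.\ the matrix $(a_{i,\bm{\alpha}})$ has rank $r$. Hence one can select $r$ monomials for which the corresponding $r \times r$ submatrix is invertible over $\K$; restricting the relations $\sum_i a_{i,\bm{\alpha}}\, g_i = 0$ to this choice gives a square linear system with invertible $\K$-coefficient matrix whose only solution is $g_1 = \dots = g_r = 0$, as required. Note that the earlier divisibility result, Lemma~\ref{lemma: g divides sum}, is not needed for this line of argument.

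The step I expect to be the main obstacle is this last one: the relations $\sum_i a_{i,\bm{\alpha}}\, g_i = 0$ have coefficients in the field $\K$ but unknowns $g_i$ living in the ring $R''$ rather than in $\K$, so one cannot simply ``solve a linear system over a field.'' The point that makes it work is that full rank over a field is witnessed by an invertible $r \times r$ \emph{scalar} submatrix, and inverting a $\K$-matrix is a purely field-theoretic operation that applies verbatim to vectors with entries in any commutative $\K$-algebra, in particular $R''$. Keeping the two variable blocks rigorously separated throughout — so that no $m_{i,j}$ ever interferes with the coefficient comparison in the $x_1, \dots, x_m$ monomials — is the other detail that must be handled with care.
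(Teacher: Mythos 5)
Your proof is correct, but it takes a genuinely different route from the paper's. The paper proves the proposition by induction on $r$: after collecting the vanishing combination into $\sum_i g_i f_i = 0$, it singles out $g_1$, splits the remaining indices according to whether $g_s$ is a scalar multiple of $g_1$, and invokes a separate divisibility result (Lemma~\ref{lemma: g divides sum}, itself proved by induction on degree) to annihilate the terms not proportional to $g_1$, finally reaching a contradiction with the independence of the $f_i$. You instead prove the clean intermediate statement that $\K$-linearly independent elements of $R'$ remain $R''$-linearly independent in $R$, by viewing $R$ as the free $R''$-module $R''[x_1,\dots,x_m]$, comparing coefficients of the basis monomials $\xalpha$, and solving the resulting scalar-coefficient system via an invertible $r\times r$ submatrix guaranteed by the rank-$r$ hypothesis; your closing remark about why a matrix invertible over $\K$ can be inverted against vectors with entries in the $\K$-algebra $R''$ is exactly the right justification for the one step that needs care. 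What your approach buys: it is shorter, needs no induction and no divisibility lemma (as you correctly note, Lemma~\ref{lemma: g divides sum} becomes superfluous), it covers all $r\geq 1$ uniformly (the paper's induction starts at $r=2$), and it isolates the structural reason the statement holds, namely that extending scalars from $\K$ to $R''$ is a free (hence faithfully flat) base change. What the paper's approach buys is self-containedness at the level of polynomial manipulations --- division with remainder and degree comparisons --- without appealing to module-theoretic language, at the cost of a more delicate case analysis and an auxiliary lemma used nowhere else.
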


\begin{proof}
We prove the statement by induction on the number $r$ of polynomials in $A$. 

Let $r=2$. Suppose there exist $c_i,d_j \in \mathbb{K}$, not all zero, such that
\[
\sum_{i=1}^{k_1} c_i m_{1,i} f_1 + \sum_{j=1}^{k_2} d_j m_{2,j} f_2=0.
\]
Let 
\[
g_1 = \sum_{j=1}^{k_1} c_j m_{1,j}, \qquad \text{ and } \qquad g_2= - \sum_{j=1}^{k_2} d_j m_{2,j}.
\]
Therefore, we have $f_1g_1 = f_2g_2$. Since $R'$ and $R''$ are polynomial rings in two disjoint sets of variables, $g_1, g_2\in R''$, whereas $f_1, f_2 \in R'$, it follows that $g_1 = c g_2$ and $f_1= d f_2$, for some $c,d \in \mathbb{K}$, that is a contradiction, as $f_1, f_2$ are supposed linearly independent. 

Let $r>2$, and suppose there exist $c_{i,j} \in \mathbb{K}$, for $i=1, \dots, r$ and $j=1, \dots, k_i$, not all zero, such that
\[
\sum_{i=1}^r \sum_{j=1}^{k_i} c_{i,j} m_{i,j} f_i = 0.
\]
Let 
\[
g_1 = \sum_{j=1}^{k_1} c_{1,j} m_{1,j}, \qquad \text{ and}  \qquad g_i= -  \sum_{j=1}^{k_i} c_{i,j} m_{i,j}, \; \text{ for } i=2, \dots, r.
\]
We may assume, without loss of generality, $c_{1,1} \neq 0$, and then  $g_1 \neq 0$, since $m_{1,j} \neq c m_{1,k}$ for each $j \neq k$ and $c \in \mathbb{K}$, by hypothesis. Therefore,
\[
g_1 f_1 = g_2f_2 + \cdots +g_r f_r.
\]
Let $S = \{s \ | \ g_s = c_s g_1, \text{ for some } c_s \in \mathbb{K}\} \subseteq \{2, \dots,r\}$ and $T = \{2, \dots, r\} \setminus S$. Then,
\begin{equation}\label{eq: g1f1 = }
        g_1 \left(f_1 - \sum_{s \in S}  c_s f_s\right) = \sum_{t \in T} g_t f_t. 
\end{equation}
$g_1$ divides the term on the left in Equation $\eqref{eq: g1f1 = }$, then it must divide the one on the right, as well. By Lemma \ref{lemma: g divides sum}, either $\sum_{t \in T} g_t f_t=0$ or, for all $t \in T$, $g$ divides $g_t$. Assume the latter is true, then 
\[
g_1 \left(f_1 - \sum_{s \in S}  c_s f_s\right) = g_1\left(\sum_{t \in T} \frac{g_t}{g_1} f_t\right). 
\]
It should hold $\sum_{t \in T} \frac{g_t}{g_1} f_t \in R^{\prime}$, then $\frac{g_t}{g_1} \in \mathbb{K}$, that is $g_t = cg_1$, for some $c \in \mathbb{K}$, but this is a contradiction since $t \not \in S$. It follows that $\sum_{t \in T} g_t f_t=0$. The set $T$ has cardinality smaller than $r$, then, by applying the induction hypothesis, $c_{b,j} = 0$, for all $b \in B$ and $j=1, \dots, k_b$. Hence, we obtain
\[
g_1\left(f_1 - \sum_{s \in S} c_s f_s\right) = 0.
\]
As $g_1 \neq 0$, $f_1 - \sum_{s \in S} c_s f_s$ should be zero. But this is a contradiction for the linearly independence of the $f_i$'s. 
\end{proof}

Consider a nonlinear filter generator with an $n$-bit inner state. Denote by $F \in \B_n$ its nonlinear output function. It takes in input $m \leq n$ bits of the inner state. Therefore, up to rename the variables, we may consider $F$ as a square-free polynomial of $\F_2[x_1, \dots, x_m]$.

Our algebraic attack consists of the following steps:
\begin{enumerate}
    \item Define two ideals of $\F_2[x_1, \dots, x_m]$, $I_0$ and $I_1$, as 
    \begin{equation*}
    I_0=\langle F \rangle +  L_m   \qquad \text{ and } \qquad I_1=\langle F +1\rangle +  L_m,
    \end{equation*}
    where $L_m \subset \F_2[x_1,\dots ,x_m]$ is the ideal generated by the field equations. Compute the reduced Gr\"obner bases, $\G_0$ and $\G_1$, with respect to $\prec_{\DLR}$, of $I_0$ and $I_1$, respectively. 
 \item Select from $\mathcal{G}_0$ (resp.  $\mathcal{G}_1$) the maximal set $\mathcal{G}_0^{\prime}$ (resp.  $\mathcal{G}_1^{\prime}$) of square-free polynomials  such that the degree of all the polynomials is not greater than $\deg F$ and $\mathcal{G}_0^{\prime}$ (resp.  $\mathcal{G}_1^{\prime}$) generates $I_0$ (resp. $I_1$). 
    \item Denote by $d = \max\{\deg f \ | \ f \in \G_0^{\prime} \cup \G_1^{\prime}\}$. Fix $D \geq d$. Multiply each $f$ in $\G_0^{\prime}$ (resp. $\G_1^{\prime}$) by all the square-free monomials $\bm{x}^{\bm{\alpha}} \in \F_2[x_1, \dots, x_m]$ such that $\deg (\bm{x}^{\bm{\alpha}} f) \leq D$. Reduce all the polynomials by $L_m$ and denote by $\mathcal{S}_0 \subset \B_m$ (resp. $\mathcal{S}_1$) the resulting set of distinct polynomials. 
    \item  Select from $\mathcal{S}_0$ (resp.  $\mathcal{S}_1$) the maximal set of linearly independent polynomials. Denote it by $\mathcal{S}^{\prime}_0$ (resp. $\mathcal{S}^{\prime}_1$). 
    \item Compute the number of required keystream bits as 
    \begin{equation}\label{eq: t}
        t = \left \lceil{ \frac{\sum_{i=0}^D\binom{n}{i}}{\min \{k_0^{\prime}, k_1^{\prime}\}}}\right \rceil , \quad \text{ where } k_i^{\prime} = \sum_{f \in \mathcal{S}^{\prime}_i} \sum_{i=0}^{D - \deg f} \binom{n-m}{i}, \; \text{ for } i=0,1.
    \end{equation}
    If $t$ is greater than the maximum number of consecutive keystream bits fixed for the stream cipher, then the attack is infeasible.
    \item Solve, by using the XL-Algorithm with the fixed $D$, the system
    \begin{equation}\label{eq: system with GB}
        \begin{cases}
        f_{0,z_i}\left(L^i\left(\bm{x}\right)\right)=0,\\
        f_{1,z_i}\left(L^i\left(\bm{x}\right)\right)=0,\\
        \quad \vdots & \qquad \text{ for } i=0, \dots, t-1,\\
        \quad \vdots \\
        f_{k_{z_i},z_i}\left(L^i\left(\bm{x}\right)\right)=0.
        \end{cases}
        \end{equation}
    where, for $j=0,1$, the polynomials $f_{0,j}, \dots, f_{k_j,j}$ are all the square-free polynomials in  $\G_j^{\prime}$.\\
\end{enumerate}

First of all, note that in $\B_n$ all the polynomials of $I_0$ (resp. $I_1$) are  annihilators  of  $F+1$ (resp. $F$), and then $\G_0^{\prime} \subseteq \AN(F+1)$ (resp. $\G_1^{\prime} \subseteq \AN(F)$). Moreover, by Step 2., all the polynomials in $\G_0^{\prime} \cup \G_1^{\prime}$ have degree at most $\deg F$. Therefore, the system \eqref{eq: system with GB} is obtained from system \eqref{eq: system with F} by multiplying each equation for more than one annihilator, and we have decreased the degree of the equations in the system.

The delicate part is to determine $t$, that is the number of needed keystream bits to solve the system and hence to perform the attack. In fact, the XL-Algorithm successfully finishes if there are enough linearly independent equations after multiply phase. In our situation, it is not simple to estimate the linear dependencies that arise from the linear update function $L$. To be more precise, there could exist  $t_1,\dots ,t_\ell \in \N$ and monomials $m_1,\dots ,m_{\ell} \in \B_n$ of degree at most $D-d$ such that  $$\{m_1f_{j_1,z_{t_1}}(L^{t_1}\left(\bm{x}\right)),\dots ,m_{\ell}f_{j_\ell,z_{t_\ell}} (L^{t_\ell}\left(\bm{x}\right))\}$$ is not a linearly independent set. 

If we suppose that after the multiply phase all the equations are linearly independent, except for a negligible part of them, then we bump into an underestimation of the needed keystream bits, and then of the security of the nonlinear filter generator. Steps from 2. to 5. are aimed at providing a fair $t$. Up to Step 4., we compute a maximal set of square-free polynomials in $\F_2[x_1, \dots, x_m]$ that are linearly independent. In Step 5, we exploit Proposition \ref{prop: lin ind} to estimate the number of linearly independent polynomials in $\F_2[x_1, \dots, x_n]$, and then, the value $t$. In detail, at each clock $t$, every polynomial $f \in \mathcal{S}^{\prime}_i$ is multiplied by $\sum_{i=0}^{D - \deg f} \binom{n-m}{i}$ monomials. Since the polynomials in $\mathcal{S}^{\prime}_i$ are linearly independent, by Proposition \ref{prop: lin ind} the number of linearly independent equations,  at each clock $t$, is given by 
\[
k_i^{\prime} = \sum_{f \in \mathcal{S}^{\prime}_i} \sum_{i=0}^{D - \deg f} \binom{n-m}{i},
\]
where either $i=0$ or $i=1$, depending on the value of $z_t$. We can approximate the number of linearly independent equations, which we obtain after the multiply phase of XL, with $t \cdot \min \{k_0^{\prime}, k_1^{\prime}\}$. To guarantee that the system can be solved by Gaussian elimination, we impose the condition 
\[
t \cdot \min \{k_0^{\prime}, k_1^{\prime}\} \geq \sum_{i=0}^D\binom{n}{i},
\]
and we get the estimation for $t$.

\section{Applications of our attack}
In this section, first we describe the stream cipher WG-PRNG, we show how to apply it on WG-PRNG. Moreover, we effectively perform our algebraic attack on two toy stream ciphers.

\begin{subsection}{Testing our attack on WG-PRNG}

\subsubsection{Specifications of WG-PRNG}\label{sec: specifications}

WG-PRNG, which was submitted to the NIST competition on Lightweight Cryptography \cite{nist}, is a nonlinear filter generator that operates over the finite field $\mathbb{F}_{2^{7}}$, defined using the primitive polynomial $f(y)=y^{7}+y^{3}+y^{2}+y+1 \in \mathbb{F}_2[y]$. Let $\omega$ be a root of $f(y)$.  By using the polynomial basis $\{1, \omega, \dots, \omega^6\}$, any $x \in \F_{2^7}$ can be written as $x = \sum_{i=0}^6 x_i \omega^i$, for $x_i \in \F_2$.\\
The function $\WGP\colon \F_{2^7}\to \F_{2^7}$ defined as 
$$\WGP\left(x\right)= \left(x+1\right) +\left(x+1\right)^{33}+\left(x+1\right)^{39}+\left(x+1\right)^{41}+\left(x+1\right)^{104} +1$$
is called the \textit{WG permutation} over $\F_{2^7}$. The function $\WGT\colon \F_{2^7}\to \F_{2}$ defined as
$$\WGT\left(x\right)=\Tr\left(\WGP\left(x\right)\right)$$
is called the \textit{WG transformation} over $\F_{2^7}$, where $\Tr : \F_{2^7} \rightarrow \F_2$ denotes the trace function defined by $\Tr(x) = x_0+x_5$.
A decimated WG permutation and decimated WG transformation over $\F_{2^7}$ are defined as $\WGP(x^d)$ and
$\WGT(x^d)$, respectively, with $\gcd(d, 2^7 -1) = 1$. The filter function of WG-PRNG is the decimated WG transformation with $d=13$. We have computed its algebraic normal form and it is given by
\begin{align}\label{eq: algebraic normal form}
\begin{split}
\WGT=& \ x_2x_3x_4x_5x_6x_7+x_1x_2x_3x_4x_6+x_1x_2x_3x_5x_6+x_1x_2x_4x_5x_6\\
&+x_2x_3x_4x_5x_6+x_1x_2x_3x_5x_7+x_1x_3x_4x_5x_7+x_2x_3x_4x_5x_7\\
&+x_1x_2x_3x_6x_7+x_1x_3x_4x_6x_7+x_2x_3x_4x_6x_7+x_1x_4x_5x_6x_7\\
&+x_2x_4x_5x_6x_7+x_1x_2x_3x_5+x_1x_2x_3x_6+x_2x_3x_4x_6+x_1x_2x_5x_6\\
&+x_2x_4x_5x_6+x_3x_4x_5x_6+x_2x_3x_4x_7+x_1x_2x_5x_7+x_2x_3x_5x_7\\
&+x_1x_4x_5x_7+x_2x_4x_5x_7+x_2x_3x_6x_7+x_1x_4x_6x_7+x_2x_5x_6x_7\\
&+x_3x_5x_6x_7+x_4x_5x_6x_7+x_1x_2x_3+x_1x_2x_5+x_1x_3x_5+x_2x_3x_5\\
&+x_1x_2x_6+x_1x_4x_6+x_2x_4x_6+x_3x_4x_6+x_4x_5x_6+x_1x_2x_7\\
&+x_1x_4x_7+x_3x_4x_7+x_4x_5x_7+x_1x_6x_7+x_3x_6x_7+x_5x_6x_7+x_3x_4\\
&+x_4x_5+x_1x_6+x_4x_6+x_2x_7+x_4x_7+x_5x_7+x_1+x_4+x_6+x_7.
\end{split}
\end{align}

Henceforth, we will write $\WGT$ and $\WGP$, even if we refer to their decimated versions with $d=13$. 
The inner state of WG-PRNG consists of $37$ words $S_i$, each of $7$ bits, for a total of $259$ bits. The WG-PRNG has two phases: an initialization phase and a running phase. The output is produced only in the running phase.

\textbf{Initialization phase: }Let $\left(S_{36},S_{35},\dots ,S_1,S_0\right)$ denote the initial state.  A random seed is loaded into the internal state and then the state update function is applied 74 times. For $t=0, \dots, 73$, the state update function is given by 
\begin{align*}
    S_{37+t} = &\WGP\left( S_{36+t}\right)+S_{31+t} + S_{30+t} + S_{26+t} + S_{24+t} + S_{19+t}  \\
    & + S_{13+t} + S_{12+t} + S_{8+t} + S_{6+t} +\left(\omega \cdot S_{t}\right).
\end{align*}

\textbf{Running phase: }In this phase, the inner state is updated according to the following LFSR feedback function:
\begin{align}\label{eq:update}
\begin{split}
    S_{37+t} = & \ S_{31+t} + S_{30+t} + S_{26+t} + S_{24+t} + S_{19+t}  \\
    & + S_{13+t} + S_{12+t} + S_{8+t} + S_{6+t} +\left(\omega \cdot S_{t}\right).
    \end{split}
\end{align}

At each clock cycle $t \geq 0$, a pseudorandom bit is produced by applying WGT on the last word of the register. A pseudorandom bit sequence $(z_{t})_{t\in\N}$ is produced by WG-PRNG as
$$z_{t}=\WGT\left(S_{110+t}\right).$$

\subsubsection{The previous algebraic attack on WG-PRNG}

We observe that the Courtois and Meier's attack is not feasible. 

Let $\bm{x}=\left(x_1,x_2,\dots ,x_{259}\right)$ be the inner state of WG-PRNG after the initialization phase. Denote by $L$ the update function of WG-PRNG that maps a state into the next one (see Equation \eqref{eq:update}). Let $\WGT$ be the filter function of WG-PRNG and consider the natural extension of $\WGT$ in $\B_{259}$. Suppose we retrieved $t$ keystream bits. Without loss of generality, we may assume that they are the first $t$ keystream bits, namely $z_0,\dots ,z_{t-1}$. Therefore, we have the following equations:
\begin{equation}\label{eq: system WGT}
\WGT \left(L^i\left(\bm{x}\right)\right)=z_i, \quad \text{ for } i=0, \dots, t-1.
\end{equation}

Hence, we get the system \eqref{eq:system}, with $f_i({\bm x}) = \WGT \left(L^{i-1}\left(\bm{x}\right)\right) +z_{i-1} $, for $i=1, \dots, t$.

 According to the algebraic attack of Courtois and Meier, to decrease the degree of the system equations, we can multiply any equation in \eqref{eq: system WGT} by an annihilator of $\WGT$ and the one of $\WGT +1$, depending on whether the keystream bit is 1 or 0. Since $\AI(\WGT) = \AI(\WGT +1) = 3$, let $f \in \AN(\WGT +1)$ and $g\in \AN(\WGT)$ be such that $\deg f = \deg g = 3$. Then, for $i=0,\dots ,t-1$, if $z_i=0$, we get $f\left(L^i\left(\bm{x}\right)\right)=0,$
 otherwise $g \left(L^i\left(\bm{x}\right)\right)=0.$
As reported in \cite{courtois2002higher}, the XL-Algorithm works if $$t\geq \binom{259}{3}  \approx 2^{21.45}$$ 

To avoid this attack, the designers conveniently restricted the number of consecutive output bits up to $2^{18}$.

\subsubsection{Applying our attack on WG-PRNG}
The main aim of our work is not to perform effectively the attack described in Section \ref{sec: our attack} on WG-PRNG, but to estimate how many keystream bits one needs to perform successfully the attack on WG-PRNG. We will show that knowing less than $2^{18}$ keystream bits, it is possible to recover the initial state, that is the security of the WG-PRNG is less than the one stated by the designer. 

We will describe in detail the steps from 1. to 5. in order to compute $t$.

\begin{enumerate}
    \item We first consider $\WGT$ as a square-free polynomial in $\F_2[x_1,\dots ,x_7]$, indeed its algebraic normal form involves only the variables $x_1,\dots ,x_7$ (see Equation \eqref{eq: algebraic normal form}). We set the two ideals of $\F_2[x_1,\dots ,x_7]$, $I_0$ and $I_1$, as
    \begin{equation*}
    I_0=\langle\WGT\rangle +  L_7   \qquad \text{ and } \qquad I_1=\langle\WGT +1\rangle +  L_7,
    \end{equation*}
    where $L_7 \subset \F_2[x_1,\dots ,x_7]$ is the ideal generated by the field equations. The reduced Gr\"obner bases with respect to $\prec_{\DLR}$ of $I_0$ and $I_1$ are 
    \begin{equation*}
    \G_0=\{f_0, f_1, f_2, \dots,f_{30}\} \cup \{x_i^2 -x_i \ \mid \ i=1, \dots, 7\}
    \end{equation*}
    and 
    \begin{equation*}
    \G_1=\{g_0, g_1, g_2, \dots,g_{30}\} \cup \{x_i^2 -x_i \ \mid \ i=1, \dots, 7\},
    \end{equation*}
    respectively, where $\deg f_0= \deg g_0 = 3$ and $\deg f_j= \deg g_j = 4$, for all $j=1,\dots ,30$.
    \item We set $\G_0^\prime =\{f_0, f_1, f_2, \dots,f_{30}\}$ and $\G_1^\prime= \{g_0, g_1, g_2, \dots,g_{30}\}$, since all the polynomials $f_i, g_j$ are square-free and of degree 3 or 4, that is smaller than $\deg F = 6$. 
    \item Let $d=4$, and we fix $D \in \{4, 5, 6, 7\}$. We multiply each $f$ in $\G_0^{\prime}$ (resp. $\G_1^{\prime}$) by all the square-free monomials $\bm{x}^{\bm{\alpha}} \in \F_2[x_1, \dots, x_7]$ such that $\deg (\bm{x}^{\bm{\alpha}} f) \leq D$. After reducing all the polynomials by $L_7$, we get the sets $\mathcal{S}_0, \mathcal{S}_1 \subset \B_7$.
    \item We select from $\mathcal{S}_0$ (resp.  $\mathcal{S}_1$) the maximal set of linearly independent polynomials, and we obtain $\mathcal{S}_0^\prime$ (resp.  $\mathcal{S}_1^\prime$). Both $\mathcal{S}_0^\prime$ and $\mathcal{S}_1^\prime$ consists of 64 polynomials. In particular, let $\left(\mathcal{S}^{\prime}_i\right)_r=\{f\in\mathcal{S}^{\prime}_i\,|\, \deg(f)=r\}$, for $i=0,1$. Then,
$$|\left(\mathcal{S}^{\prime}_i\right)_r|=\begin{cases}0 &\text{if } r< 2,\\
1 &\text{if }r=3, \\ 
34 &\text{if }r= 4,\\
21 &\text{if }r= 5,\\
7 &\text{if }r=6,\\
1 &\text{if }r=7.\\
\end{cases}$$
\item In this case, we have $k_0^\prime =k_1^\prime$, and, by varying $D$, they are equal to:
$$k_i^\prime=\begin{cases}
287\approx 2^{8.16} &\text{if }D=4,\\
40502\approx 2^{15.31} &\text{if }D=5,\\
3756585\approx 2^{21.85} &\text{if }D=6,\\
258089371\approx 2^{27.94} &\text{if }D=7.
\end{cases}$$
By means of the Equation \eqref{eq: t}, we compute the number of required keystream bits, as summarize in Table \ref{table: tabWG}, together with the time complexity. The latter is computed as $\mathcal{O}\left(T^{\omega}\right)$, as reported in Subsection \ref{subsec: solving}.
    
\end{enumerate}
 
 {\renewcommand\arraystretch{1.3}
\begin{center}
\begin{table}[H]
\centering
\begin{tabular}{|l|l|l|}
\hline
\textbf{D} & \textbf{Keystream Bits Collected ($t$)} &\textbf{Time Complexity} \\ \hline
$4$                         & $2^{19.31}$                               & $2^{77.06}$                               \\ \hline
$5$                         & $2^{17.84}$                               & $2^{92.98}$                               \\ \hline
$6$                         & $2^{16.72}$                               & $2^{108.15}$                              \\ \hline
$7$                         & $2^{15.80}$                               & $2^{122.68}$                              \\ \hline
\end{tabular}
\caption{}
\label{table: tabWG}
\end{table}
\end{center}}

As shown in Table \ref{table: tabWG}, the attack is not feasible if $D=4$, as it needs more than $2^{18}$ keystream bits. However, for both $D=5$ and $D=6$, it is possible to carry out the attack knowing $2^{17.84}$ and $2^{16.72}$, respectively, and so the security level is less than 128 bits, contradicting the claim in \cite{altawy2020wage}. Finally, for $D \geq 7$, the time complexity is worse than brute force.



\end{subsection}

\begin{subsection}{Testing our attack on two toy stream ciphers}
To validate our algebraic attack, we will define a general construction of a scaled version of WG-PRNG, and we will test our attack on two instances.

We consider nonlinear filter generators that operate over the finite field $\mathbb{F}_{2^{7}}$, defined using the primitive polynomial $f(y)=y^{7}+y^{3}+y^{2}+y+1 \in \mathbb{F}_2[y]$, as in WG-PRNG. 
Moreover, at each clock $t\geq0$, it produces a keystream $z_t=F(L^t(S))$, where $S$ is the initial state of length $n=7a$ (for WG-PRNG, $a=37$), $F$ is the nonlinear output function of WG-PRNG, i.e. WGT, and $L$ is a linear update function satisfying the following properties:
\begin{itemize}
    \item it is a primitive polynomial,
    \item it has an odd number of terms,
    \item the constant term is $\omega$, where $\omega$ is a root of $f(y)$,
    \item the coefficients of the terms of degree nonzero are in $\F_2$.
\end{itemize}

We have tested our algebraic attack on two nonlinear filter generators, as defined above. We set $a=3$ and $a=5$, respectively, and the linear update functions are $L_1 = x^3 +x+\omega$ and $L_2 = x^5 +x^2+\omega$, respectively.

We have studied only the case $D=5$.

Note that, since the filter function is the same of WG-PRNG, the first four steps of the attack returns the same results of the first four steps performed on WG-PRNG.

First we consider the toy that has $L_1$ as linear update function. We compute $k^\prime_0$ and $k^\prime_1$. For $i=0,1$, we have $k_i^\prime=637$. According to our estimates, we need to collect $t=44$ keystream bits to perform an attack. We run computations to determine how much $tk^\prime$ is close to the number of linearly independent equations obtained at the end of the algorithm. The number of linearly independent equations is $26544$, while $tk_i^\prime =28028$.

Now we consider the toy that has $L_2$ as linear update function. We compute $k^\prime_0$ and $k^\prime_1$. For $i=0,1$, $k_i^\prime=1414$. According to our estimates, we need to collect $t=272$ keystream bits to perform an attack. We run computations to determine how much $tk^\prime$ is close to the number of linearly independent equations obtained at the end of the algorithm. The number of linearly independent equations is $353199$, while $tk_i^\prime =384608$.

The relevant result of these two experiments is that in both cases, by knowing only the number of keystream bits we supposed was sufficient, we completely recover the initial state (which was chosen randomly). 

\end{subsection}

\textbf{Acknowledgement.} The computation of this work has been obtained thanks to Magma \cite{bosma1997magma} and to the server of the Laboratory of Cryptography of the Department of Mathematics, University of Trento. The results in this paper appear partially in the MSc thesis of the second author, who thanks his supervisors (the other two authors). This work has been partially presented in Cryptography and Coding Theory Conference
2021, organized by the group UMI (Italian Mathematical Union) ``Crittografia e Codici'' and by De Componendis Cifris, in September 2021. 

\bibliographystyle{plain}
\bibliography{Bibliography.bib}
\thispagestyle{empty}

\vspace{0.2cm}

\small{\textit{Email adressess}: \{carla.mascia, massimiliano.sala\}@unitn.it, enrico.piccione@uib.no.}
\end{document}